\newtheorem{thm}{Theorem}[section]
\newtheorem{cor}{Corollary}[section]
\newtheorem{prop}{Proposition}[section]
\numberwithin{equation}{section}
\def\C{\mathbb C}
\def\D{\mathbb D}
\def\R{I\!\!R}
\def\H{I\!\!H}
\def\C{I\!\!\!\!C}
\def\D{I\!\!D}
\title{Resolvent kernels and time dependent Schr\" odinger equations under magnetic field on the hyperbolic half plane $\H$ and the Morse potential on the real line $\R$ }
\author{Mohamed Vall Ould Moustapha}
\begin{document}
\maketitle
\begin{abstract}

This paper deals with exact  formulas for the resolvent kernels and exact solutions of time dependent  Schr\"odinger equations under a uniform magnetic field on the hyperbolic half plane $\H$, and under a diatomic molecular Morse potential on the real line $\R$. 
\end{abstract}
 
\section{Introduction}
 
In this paper we give an explicit formulas for the Schwartz integral kernels $\widetilde{G}_k(\lambda, z,  z')$, ${\cal G}^{\lambda}_k(\mu, X,  X')$ , $\widetilde{H}_k(\lambda, z,  z')$ and ${\cal H}^{\lambda}_k(\mu, X,  X')$ of the resolvent and evolution operators 
 $\left(\widetilde{{\cal D}}_k+\lambda^2\right)^{-1}$, $\left(\Lambda^{\lambda, k}+\mu^2\right)^{-1}$, $e^{t \widetilde{{\cal D}}_k}$  and $e^{t \Lambda^{\lambda, k}}$.

The operators $\widetilde{{\cal D}}_k$ and $\Lambda^{\lambda, k}$ are
the modified Schr\"odinger operators with magnetic field  on the hyperbolic upper half plane $\H$  and the Schr\"odinger operator with the diatomic molecular Morse potential on the real line $\R$, they are respectively given by (Ikeda-Matsumoto \cite{IKEDA-MATSUMOTO})
\begin{align}\widetilde{{\cal D}}_{k}=y^2\left(\frac{\partial^2}{\partial x^2}+\frac{\partial^2}{\partial y^2}\right)+ 2 i k y\frac{\partial}{\partial x}+ \frac{1}{4},\end{align}
and
\begin{align}\label{Morse}\Lambda^{\lambda, k}=\frac{\partial^2}{\partial X^2}-2k\lambda e^X -\lambda^2 e^{2X}.\end{align}
Note that the operator $\widetilde{{\cal D}}_{k}$ is introduced firstly by Maass \cite{MAASS}  and can be written as
\begin{align}\widetilde{{\cal D}}_k= {\cal L}_{k}^{\H}+k^2+\frac{1}{4},\end{align} 
where ${\cal L}_{k}^{\H}$ is the  Schr\"odinger operator with uniform magnetic field on the hyperbolic upper half plane $\H$ 
\begin{align}{\cal L}_{k}^{\H}=-(z-\overline{z})^2)\frac{\partial^2}{\partial z\partial\overline{z}}+k(z-\overline{z})\left(\frac{\partial}{\partial z}+\frac{\partial
}{\partial \overline{z}}\right)-k^2.\end{align}


The operators  ${\cal L}_{k}^{\H}$   has a physical interpretation as being the Hamiltonian which governs a non relativistic charged particle
moving under the influence of the magnetic field of constant strength $|k|$, perpendicular to $\H$. The operator $\widetilde{{\cal D}}_{k}$  is basic to the work of Roelcke \cite{ROELCKE}, Elstrodt \cite{ELSTRODT}, Fay  \cite{FAY} and Comtet \cite{COMTET}.\\
 The purely vibrational levels of diatomic
molecules with angular momentum $l=0$ have been described by the Morse potential since $1929$ (see Morse\cite{MORSE}).\\
The operators  $\widetilde{{\cal D}}_{k}$ and $\Lambda^{\lambda, k}$ are,
non-positive, definite
and each of them has an absolute continuous spectrum and a points spectrum if $|k|\ge 1/2$ .\\
For a recent work on the magnetic field on the hyperbolic plane and the Morse Potential (see Ould Moustapha \cite{OULD MOUSTAPHA2} and the references therein).\\
 For $k=0$, the operator  $\widetilde{{\cal D}}_{0}$  reduces to the free Laplace-Beltrami operators on the hyperbolic half-plane and
the free resolvent kernel on the hyperbolic upper half plane 
is given by 
\begin{align*} \widetilde{G}_0(\mu, z, z')=
\frac{\Gamma(s)\Gamma(s)}{4\pi\Gamma(2s)}\cosh^{-2s}\rho( z,  z'){}_2F_1\left(s, s , 2s,  \cosh^{-2}\rho(z, z')\right),\end{align*}
where $s=(1-i\mu)/2$ and $\rho(z,  z')$ is  the hyperbolic half plane distance.\\
The function ${}_2F_1(a, b, c, z)$ is the Gauss hypergeometric function defined by:
\begin{align}\label{Gauss}
  {}_2F_1(a, b, c, z)=\sum_{n=0}^{\infty}\frac{(a)_n(b)_n}{(c)_n n!}z^n,
  \quad |z|<1,
\end{align}
  $(a)_n=\frac{\Gamma(a+n)}{\Gamma(a}$  is the Pochhamer symbol,
and $\Gamma$ is the classical Euler function.\\
For $k=0$ the operator $\Lambda^{\lambda, 0}$ reduces to the Schr\"odinger operator with the so-called Liouville potential. In this case the resolvent kernel is given by the explicit formula Ikeda-Matsumoto\cite{IKEDA-MATSUMOTO}
\begin{align}G^{M}_{\lambda, 0}(\mu, X, X')=2 I _{i\mu}(\lambda e^X)K_{i\mu}(\lambda e^{X'}).\end{align}
For the wave kernel of the operator $\Lambda^{\lambda, 0}$ see Abdelhaye et al. \cite{ABDELHAY et al.}\\
The organization of the paper is as follows. In sections 2 and 3 we give explicit formula for the resolvent kernel and we solve the time dependent  Schr\"odinger equation  with magnetic field 
on the hyperbolic half plane $\H$. Sections 4  and 5  are devoted to explicit formulas for the resolvent kernels and the solution of the time dependent  Schr\"odinger equation with diatomic molecular Morse
 potential on $\R$.
Finally in section 6 we introduce and discuss some applications.
\section{Resolvent kernel under the magnetic field on the hyperbolic half  plane $\H$ }
Let  $\H=\{z=x+iy\in \C,\, y>0\}$ be the upper half plane endowed with the usual hyperbolic metric $\widetilde{ds}=\frac{dx^2+dy^2}{y^2},$ 
the Riemannian space $(\H, \widetilde{ds})$ is called the hyperbolic upper half plane.
 The metric $\widetilde{ds}$ is invariant with respect to the group $\widetilde{G}=SL_2(\R)$ with
 $$ SL_2(\R)=\left\{ \left(
\begin{array}{cc}
a & b\\
c & d
\end{array}\right) a, b, c, d \in \R: a d-c b=1\right\}.$$
The hyperbolic distance between two points $z, z' $ is given by
\begin{align}\label{dist2} \cosh^2 (\rho(z, z')/2)=\frac{(x-x')^{2}+(y+y')^{2}}{4yy'}.\end{align} 
 It is well known that the hyperbolic upper half plane $(\H, \widetilde{ds})$ is isomorphic to the hyperbolic disc $(\D, ds)$ 
via the Cayley transform:
$$ z=c^{-1}w=-i\frac{w+1}{w-1},
w=c z=\frac{z-i}{z+i}. $$
Let
${\cal D}_{k}$ be  the modified Schr\"odinger operator with uniform magnetic field on the hyperbolic disc $\D$, defined by  (Ould Moustapha \cite{OULD MOUSTAPHA2})
 \begin{align*}{\cal D}_{k}=(1-|w|^2)^2\frac{\partial^2}{\partial w \partial \overline{w}}+k(1-|w|^2)w\frac{\partial}{\partial w} \nonumber\\
-k(1-|w|^2)\overline{w}\frac{\partial}{\partial \overline{w}}-k^2|w|^2+k^2+\frac{1}{4}. \end{align*}
The operators ${\cal D}_{k}$ and $\widetilde{{\cal D}}_{k}$ are related by the formula
\begin{align}\label{intertwin}U_{k}\widetilde{{\cal D}}_{k}U^{-1}_{k}f(c^{-1}w)={\cal D}_{k} f(c^{-1}w),\end{align}
with
\begin{align}\label{U}\left(U_{k}f\right)(w)=\left(\frac{1-\overline{w}}{1-w}\right)^{k}f(c^{-1}w),\ \ \left(U^{-1}_{k}g\right)(z)=\left(\frac{i-\overline{z}}{z+i}\right)^{k}g(c z).\end{align} 
\begin{prop}\label{Resolvent} The resolvent kernel associated to the modified Schr\"odinger operators with magnetic field $\widetilde{{\cal D}}_{k}$ on the hyperbolic half plane $\H,$
 is given by
$\widetilde{G}_k(s, z, z')=\frac{\Gamma(s-k)\Gamma(s+k)}{4\pi\Gamma(2s)}\left(\frac{z'-\overline{ z}}{z-\overline{z'}}\right)^{k}\left(\cosh^2\rho(z, z')/2\right)^{-s}\times $  \begin{align}F\left(s-|k|, s+|k|, 2s, \cosh^{-2}\rho(z, z')/2\right) 
\end{align} 
with $\cosh^2\rho(z, z')/2$ is given in \eqref{dist2}
 \end{prop}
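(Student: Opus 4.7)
The natural strategy is to exploit the intertwining relation \eqref{intertwin}--\eqref{U}, which conjugates $\widetilde{{\cal D}}_k$ on $\H$ with the analogous magnetic operator ${\cal D}_k$ on the disc $\D$. Since this gives $(\widetilde{{\cal D}}_k+\lambda^2)^{-1}=U_k^{-1}({\cal D}_k+\lambda^2)^{-1}U_k$, the half-plane resolvent kernel is obtained directly from the disc resolvent kernel, which is available in the author's earlier work \cite{OULD MOUSTAPHA2} in essentially the same shape as the target formula but with the disc cocycle $\left(\frac{1-\bar w w'}{1-w\bar{w'}}\right)^k$ and the disc hyperbolic distance in place of their half-plane counterparts.

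My plan is: first, quote the disc formula for $G_k^{\D}(s,w,w')$; second, work out the kernel of $U_k^{-1}({\cal D}_k+\lambda^2)^{-1}U_k$ by writing the composition as an iterated integral, changing variables through the Cayley map $w=cz$, $w'=cz'$, and reading off the kernel. Two cocycles survive: $\left(\frac{i-\bar z}{z+i}\right)^k$ from $U_k^{-1}$ applied in the $z$-variable, and $\left(\frac{1-\overline{cz'}}{1-cz'}\right)^k$ picked up when $U_k$ is evaluated at the integration variable $w'=cz'$. Because the Cayley transform is an isometry of the two hyperbolic metrics, both the hyperbolic volume element and $\cosh^2(\rho/2)$ pull back correctly and no Jacobian factor appears. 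Third, I identify the resulting kernel with the stated formula. The only non-trivial piece is the cocycle identity
\[
\left(\frac{i-\bar z}{z+i}\right)^k\left(\frac{1-\overline{cz'}}{1-cz'}\right)^k\left(\frac{1-\bar{w}w'}{1-w\bar{w'}}\right)^k\bigg|_{w=cz,\,w'=cz'}=\left(\frac{z'-\bar z}{z-\bar{z'}}\right)^k,
\]
which follows from the elementary Cayley relations $1-cz\,\overline{cz'}=\frac{2i(\bar{z'}-z)}{(z+i)(\bar{z'}-i)}$ and its symmetric analogue $1-\overline{cz}\,cz'=\frac{2i(\bar z-z')}{(\bar z-i)(z'+i)}$.

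The main obstacle is the careful sign and $i$-bookkeeping in this cocycle identity; the remaining ingredients (isometry of the Cayley transform, matching of the hyperbolic measures, invariance of the ${}_2F_1$ argument) are routine. A self-contained alternative, should one wish to avoid invoking the disc result, is to verify the proposed kernel directly: the gauge-twisted ansatz $\chi(z,z')f(u)$ with $\chi=\left(\frac{z'-\bar z}{z-\bar{z'}}\right)^k$ and $u=\cosh^{-2}(\rho(z,z')/2)$ reduces $\widetilde{{\cal D}}_k+\lambda^2$, acting in $z$, to a hypergeometric operator in $u$; selecting the solution regular at $u=0$ yields the ${}_2F_1(s-|k|,s+|k|,2s,u)$ factor times $u^{s}$, and normalising against the unit diagonal singularity at $u=1$ fixes the prefactor $\Gamma(s-k)\Gamma(s+k)/(4\pi\Gamma(2s))$ via the standard connection formula for ${}_2F_1$.
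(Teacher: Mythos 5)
Your proposal follows essentially the same route as the paper: the paper's proof also transfers the known disc resolvent kernel $G_k(s,w,w')$ from \cite{OULD MOUSTAPHA2} to the half plane via the Cayley transform and the gauge intertwiners $U_{\pm k}$ of \eqref{intertwin}--\eqref{U}, exactly as you outline (the paper merely states the transport relation without writing out the cocycle computation you supply). Your version is a more detailed rendering of the same argument, so no further comparison is needed.
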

 \begin{proof} Using the formula \eqref{U} 
we have 
\begin{align} \label{transport}
U_k^z [U_{-k}^{z'}[\widetilde{G}_k(t, z, z')](w')](w')=G_k(s, w, w')
\end{align}
where $G_k(s, w, w')$ is resolvent kernel with modified uniform magnetic potential on the hyperbolic disc given by (Ould Moustapha \cite{OULD MOUSTAPHA2})\\
$\label{G}G_k(s, w, w')=
\frac{\Gamma(s-k)\Gamma(s+k)}{4\pi\Gamma(2s)}\left(\frac{1-w\overline{w'}}{1-\overline{w}w'}\right)^{k}\times $ 
\begin{align}\left(\cosh^2 (d(w, w')/2\right)^{-s}F\left(s-|k|, s+|k|, 2s, \cosh^{-2}(d(w, w')/2\right),
\end{align}
 \begin{align}\label{dist1}\cosh^2 (d(w, w')/2)=\frac{|1-w\overline{w'}|^2}{(1-|w|^2)(1-|w'|^2)}, \end{align}
and from the formula \eqref{intertwin} 
we obtain the results of the proposition.
\end{proof}
 \begin{prop}  Let $\widetilde{G}_k(\mu, z, z')$ be the resolvent kernel with uniform magnetic field on the hyperbolic upper half plane, then  we have \\ 
\begin{align}\label{key}\widetilde{G}_k(\mu, z, z')=\int_\rho^{+\infty}\widetilde{W}_k(b,\rho)\frac{e^{-i\mu b}}{2i\mu} d b,  \end{align}
where\\
$\widetilde{W}_k(b, \rho)=\frac{1}{2\pi}\left(\frac{z'-\overline{z}}{z-\overline{z'}}\right)^{k}
(\cosh^2b/2-\cosh^2\rho/2)^{-1/2}\times $ \begin{align}\label{Kernel} F\left(|k|,-|k|, 1/2,1-\frac{\cosh^2b/2}{\cosh^2\rho/2}\right),\end{align}
and  $\rho(z, z')$  is the distance between two points $z$ and $z'$ given in  \eqref{dist1}.
\end{prop}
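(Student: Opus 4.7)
The idea is to unfold the Gauss hypergeometric factor in the closed-form expression of Proposition \ref{Resolvent} into a one-dimensional integral along the geodesic distance $b$, via an Euler integral representation and a geometric change of variable.

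First, I would set $s=(1-i\mu)/2$ and isolate from Proposition \ref{Resolvent} the $\rho$-dependent piece
\[
I(s,\rho)=(\cosh^2(\rho/2))^{-s} F\!\left(s-|k|,s+|k|;2s;\cosh^{-2}(\rho/2)\right),
\]
since the phase $((z'-\overline z)/(z-\overline{z'}))^{k}$ already appears on both sides of \eqref{key} and can be set aside. The claim then reduces to an identity between $I(s,\rho)$ (times explicit Gamma factors) and the integral of $\widetilde{W}_k(b,\rho)\,e^{-i\mu b}/(2i\mu)$ from $b=\rho$ to $b=+\infty$.

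Second, I would invoke the Euler integral representation
\[
F(s-|k|,s+|k|;2s;w)=\frac{\Gamma(2s)}{\Gamma(s+|k|)\Gamma(s-|k|)}\int_0^1 t^{s+|k|-1}(1-t)^{s-|k|-1}(1-wt)^{-(s-|k|)}\,dt,
\]
with $w=\cosh^{-2}(\rho/2)$, and then perform the geometric change of variable $t=\cosh^2(\rho/2)/\cosh^2(b/2)$, which maps $b\in[\rho,+\infty)$ diffeomorphically onto $t\in(0,1]$. Routine algebra turns $(1-t)^{s-|k|-1}$ into a power of $\cosh^2(b/2)-\cosh^2(\rho/2)$, turns $(1-wt)^{-(s-|k|)}$ into a power of $\sinh^2(b/2)/\cosh^2(b/2)$, and produces via the Jacobian an overall factor $\cosh^{-2s}(b/2)$.

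Third, I would recognize the surviving $|k|$-dependent factor as the Chebyshev-type hypergeometric $F(|k|,-|k|;\tfrac12;1-\cosh^2(b/2)/\cosh^2(\rho/2))$, matching the kernel in \eqref{Kernel}; this identification uses the classical identity $F(|k|,-|k|;\tfrac12;-\sinh^2\theta)=\cosh(2|k|\theta)$ after setting $\sinh^2\theta=\cosh^2(b/2)/\cosh^2(\rho/2)-1$. The remaining $\cosh^{-2s}(b/2)=\cosh^{-(1-i\mu)}(b/2)$ factor combined with a single integration by parts in $b$ produces the prefactor $e^{-i\mu b}/(2i\mu)$, which is the familiar manifestation of the resolvent as the time-Fourier transform of the wave propagator on hyperbolic space.

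\textbf{Main obstacle.} The principal difficulty is the combinatorial bookkeeping: tracking Gamma-function constants, powers of $2$ from repeatedly unfolding $\cosh^2(\cdot/2)$, and the sign coming from the Jacobian so that the final prefactors match \eqref{Kernel} exactly. A secondary subtlety is convergence at the endpoints: near $b=\rho$ one faces only an integrable square-root singularity, while at $b=+\infty$ convergence requires $\mathrm{Im}\,\mu>0$ (equivalently $\mathrm{Re}\,s>|k|$), the formula then being extended to the remaining spectral range by analytic continuation.
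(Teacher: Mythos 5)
The paper proves this proposition by transporting the corresponding formula on the hyperbolic disc through the Cayley intertwiner \eqref{transport} and quoting Theorem 2.3 of \cite{OULD MOUSTAPHA3}; you instead attempt a self-contained derivation from the closed form in Proposition \ref{Resolvent}. That is a legitimate goal, but your key step does not work as described. Write $C_x=\cosh^2(x/2)$. In the Euler integral the $s$-dependent part of the integrand is $t^{s}(1-t)^{s}(1-wt)^{-s}$, and under your substitution $t=C_\rho/C_b$, $w=C_\rho^{-1}$, this becomes (after multiplying by the prefactor $C_\rho^{-s}$)
\begin{equation*}
\left(\frac{C_b-C_\rho}{C_b(C_b-1)}\right)^{s}=\left(\frac{2(\cosh b-\cosh\rho)}{\sinh^2 b}\right)^{s},
\end{equation*}
which behaves like $(4e^{-b})^{s}$ as $b\to\infty$ and is in no way of the form $e^{-i\mu b}=e^{(2s-1)b}$ times an $s$-independent function of $b$. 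So the substitution does not separate the spectral parameter into the exponential factor required by \eqref{key}; the leftover $s$-dependence sits in algebraic powers of $\cosh(b/2)$, $\sinh(b/2)$ and $\cosh^2(b/2)-\cosh^2(\rho/2)$, and no integration by parts can convert $\cosh^{-2s}(b/2)$ into $e^{-i\mu b}/(2i\mu)$ as you assert. The same objection already kills the case $k=0$, where the target identity is Heine's integral $Q_\nu(\cosh\rho)=\int_\rho^\infty e^{-(\nu+1/2)b}(2\cosh b-2\cosh\rho)^{-1/2}db$ and the correct derivation uses a quite different (and more delicate) change of variables.

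The second, independent gap is the identification of the ``surviving $|k|$-dependent factor'' with $F(|k|,-|k|;\tfrac12;1-C_b/C_\rho)$. After your substitution the $|k|$-dependence of the integrand is a product of elementary powers $t^{|k|}(1-t)^{-|k|}(1-wt)^{|k|}$ of a single variable, whereas $F(|k|,-|k|;\tfrac12;\cdot)$ is a transcendental function of its argument for generic real $k$ (it degenerates to the Chebyshev polynomial $T_{2|k|}$ only for $2|k|\in\N$, cf. Proposition 2.3(iii)); the two cannot coincide, so this step is not ``routine algebra'' but is false. What is actually needed here is a Mehler--Dirichlet/Heine-type integral representation for the Jacobi function of the second kind with parameter $k$ --- equivalently, the composition of the Euler representation with a further Weyl fractional-integral transform whose kernel is precisely $\widetilde W_k$ --- and that is exactly the external result (Theorem 2.3 of \cite{OULD MOUSTAPHA3}) that the paper invokes. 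Your convergence remarks at the endpoints are fine, but they address a secondary issue; the central identity remains unproved in your argument.
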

\begin{proof}
The proof of this proposition is a consequence of the formula \eqref{transport}  and the  Theorem 2.3 (Ould Moustapha \cite{OULD MOUSTAPHA3}).
\end{proof}

\begin{prop} Let $\widetilde{W}_k(b, \rho)$ be the kernel in \eqref{Kernel}, the following formulas hold\\
{\bf i)}
$\widetilde{W}_k(b, \rho)=\frac{1}{2\pi}\left(\frac{z'-\bar z}{z-\bar{z'}}\right)^{k}(\cosh^2(b/2) - 
\cosh^2(\rho(z, z')/2))_+^{-1/2}\times$\\
$F(2|k|,-2|k|;1/2,(1-\frac{\cosh (b/2)}{\cosh (\rho/2)}).$\\
{\bf ii)}
$\widetilde{W}_k(b, \rho)=\frac{1}{2\pi}\left(\frac{z'-\bar z}{z-\bar{z'}}\right)^{k}\left(\frac{\cosh^2b/2}{\cosh^2\rho/2}\right)^{|k|}(\cosh^2(b/2) - 
\cosh^2(\rho(z,z')/2))_+^{-1/2}\times$\\
$ F(-|k|,1/2-|k|;1/2,1-\frac{\cosh^2 (\rho/2)}{\cosh^2 (b/2)}).$\\
{\bf iii)} For $|k|$ integer or half integer we have:\\
$\widetilde{W}_k(b, \rho)=\frac{1}{2\pi}\left(\frac{z'-\bar z}{z-\bar{z'}}\right)^{k}(\cosh^2(b/2) - 
\cosh^2(\rho(z,z')/2))_+^{-1/2}T_{2|k|}\left(\frac{\cosh (b/2)}{\cosh (\rho(z,z')/2)}\right)$\\
where $T_{2|k|}(x)$ is the Chebichev polynomial of the first kind.\\
{\bf iv)}  For $|k|$ integer or half integer we have\\
$\widetilde{W}_k(b, \rho)=\frac{1}{2\pi}\left(\frac{z'-\bar z}{z-\bar{z'}}\right)^{k}\left(\cosh(\rho(z, z')/2)\right)^{-2|k|}\times$ \\  $\sum_{n=0}^{n=[|k|]}\frac{(-|k|)_n (1/2-|k|)_n}{(1/2)_n n!}\cosh(b/2)^{2|k|-2n}\left(\cosh^2b/2-\cosh^2\rho/2\right)^{n-1/2}.$\\
where $[|k|]$ is the entire part of $|k|$.
\end{prop}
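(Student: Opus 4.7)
The starting point for all four items is the formula for $\widetilde{W}_k(b,\rho)$ displayed in (2.11), i.e.
\[
\widetilde{W}_k(b,\rho)=\frac{1}{2\pi}\left(\frac{z'-\overline z}{z-\overline{z'}}\right)^{k}\bigl(\cosh^2(b/2)-\cosh^2(\rho/2)\bigr)^{-1/2}F\!\left(|k|,-|k|;\tfrac12;1-u^2\right),
\]
where I abbreviate $u:=\cosh(b/2)/\cosh(\rho/2)\ge 1$. Each of the four items amounts to rewriting the hypergeometric factor by a single classical identity and then rearranging the prefactors.

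For \textbf{(i)} I would apply the quadratic transformation $F(a,b;a+b+\tfrac12;4z(1-z))=F(2a,2b;a+b+\tfrac12;z)$ with $a=|k|$, $b=-|k|$, so $a+b+\tfrac12=\tfrac12$. Solving $4z(1-z)=1-u^2$ by $z=(1-u)/2$ converts the right-hand side to $F(2|k|,-2|k|;\tfrac12;(1-u)/2)$, which is exactly the form stated in (i). For \textbf{(ii)} I would invoke Pfaff's transformation $F(a,b;c;z)=(1-z)^{-a}F(a,c-b;c;z/(z-1))$ after using the symmetry $F(|k|,-|k|;\tfrac12;\cdot)=F(-|k|,|k|;\tfrac12;\cdot)$. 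With $a=-|k|$, $b=|k|$, $c=\tfrac12$ and $z=1-u^2$, one has $(1-z)^{-a}=u^{2|k|}$ and $z/(z-1)=1-1/u^2$, giving $F(-|k|,\tfrac12-|k|;\tfrac12;1-1/u^2)$ with the prefactor $u^{2|k|}$ that the statement predicts.

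For \textbf{(iii)} I would combine (i) with the elementary trigonometric identity $F(\alpha,-\alpha;\tfrac12;\sin^2\theta)=\cos(2\alpha\theta)$. Choosing $\theta$ by $\cos\theta=u$ gives $\sin^2(\theta/2)=(1-u)/2$, so (i) becomes $\cos(2|k|\theta)=T_{2|k|}(\cos\theta)=T_{2|k|}(u)$ as soon as $2|k|$ is a non-negative integer; this is the statement. Finally, for \textbf{(iv)} I would expand the hypergeometric in (ii) as a series: when $|k|$ is a non-negative integer the factor $(-|k|)_n$ cuts the sum at $n=[|k|]$, and when $|k|$ is a positive half-integer the factor $(1/2-|k|)_n$ cuts it at the same index $n=[|k|]$. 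Writing
\[
u^{2|k|}\bigl(1-u^{-2}\bigr)^n=\frac{\cosh^{2|k|-2n}(b/2)\,(\cosh^2(b/2)-\cosh^2(\rho/2))^{n}}{\cosh^{2|k|}(\rho/2)}
\]
and combining with the prefactor $(\cosh^2(b/2)-\cosh^2(\rho/2))^{-1/2}$ immediately matches the sum in (iv).

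The only real obstacle is pattern-matching: one must recognize which classical hypergeometric transformation (quadratic, Pfaff, Chebyshev evaluation, or termination of a $_2F_1$) converts the base formula into each of the four advertised shapes. Once the correct identity is identified, the remaining algebra is a bookkeeping exercise with $u=\cosh(b/2)/\cosh(\rho/2)$. The most delicate step is (ii), where the Pfaff prefactor $u^{2|k|}$ and the change of argument $1-u^2\mapsto 1-1/u^2$ must be tracked simultaneously with the swap of the first two parameters.
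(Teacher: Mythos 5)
Your proof follows essentially the same route as the paper: the same quadratic transformation for (i), Pfaff's transformation for (ii), the Chebyshev--hypergeometric identity for (iii) (you use its trigonometric form $F(\alpha,-\alpha;\tfrac12;\sin^2\theta)=\cos(2\alpha\theta)$, the paper cites $T_n(1-2x)=F(-n,n,\tfrac12,x)$, which is the same identity), and termination of the $_2F_1$ series for (iv). Your bookkeeping is correct; note only that your argument $(1-u)/2$ in (i) is the right one and the paper's displayed argument $1-u$ appears to be missing the factor $\tfrac12$ (as your own derivation of (iii) from (i) confirms).
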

\begin{proof}
Using the formula Magnus et al. \cite{MAGNUS et al.},$ p.50$\\
$F(a, b, (a+b+1)/2, z)=F(a/2, b/2, (a+b+1)/2, 4z-4z^2)$
we get i).\\ 
To see ii) we apply the following formula Magnus et al. \cite{MAGNUS et al.},$p. 47$\\
$F(a, b, c, z)=(1-z)^{-a}F(a, c-b, c, \frac{z}{z-1})$.\\
The part iii) is a consequence of the formula  Magnus et al. \cite{MAGNUS et al.}, $p.39$,\\
$T_n(1-2x)=F(-n, n, 1/2, x)$.\\
Finally iv) can be shown from ii) and  the formula  Magnus et al. \cite{MAGNUS et al.} p.$39$ 
$F(-m, b, c, z)=\sum_{n=0}^{n=[m]}\frac{(-m)_n (b)_n}{(c)_n n!}z^n$
and the Proof of Lemma is finished.\\
\end{proof}
\section{The time dependent Schr\"odinger equation under a magnetic field on the hyperbolic upper half plane}
Now, we give the solution of the following time dependent Schr\"odinger equation under a uniform magnetic field on the hyperbolic upper half plane. 
\begin{align}\label{Cauchy-Heat} \left \{\begin{array}{cc}\widetilde{\cal D}_k u(t, z)=\frac{\partial}{\partial
t}u(t, z)
, (t, z)\in \R^\ast_+\times \H \\ u(0, z)=0, u_t(0, z)=u_1(z), u_1\in
C^\infty_0(\H)\end{array}
\right., \end{align}
\begin{cor} The Schwartz integral  kernel of the heat operator $e^{t \widetilde{\cal D}_k}$  that solves the heat Cauchy problem\eqref{Cauchy-Heat}\ with uniform magnetic field on the hyperbolic plane is given by:
\begin{align}\label{heat kernel}H_k(t, z, z')=\int_r^\infty\frac{ e^{-b^2/4t}}{(4\pi t)^{3/2}}\widetilde{W}_k(b, z,  z')b db,\end{align}
where\\
$\widetilde{W}_k(r, \rho)=\frac{1}{2\pi}\left(\frac{z'-\overline{z}}{z-\overline{z'}}\right)^{k}
(\cosh^2r/2-\cosh^2\rho/2)^{-1/2}\times $ 
\begin{align} F\left(|k|,-|k|, 1/2, 1-\frac{\cosh^2r/2}{\cosh^2\rho/2}\right),\end{align}
and $\rho(z, z')$  is the distance between two points $z$ and $z'$ given \eqref{dist1}.
\end{cor}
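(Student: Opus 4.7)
My strategy is a Gaussian subordination argument applied to the wave-type integral representation \eqref{key} of the resolvent kernel, in the spirit of Theorem 2.3 of \cite{OULD MOUSTAPHA3} that is cited for Proposition 2.2. Since $\widetilde{{\cal D}}_{k}$ is non-positive and self-adjoint, for $\mu$ purely imaginary the resolvent kernel $\widetilde{G}_{k}(\mu,z,z')$ is, up to a sign, the Laplace transform in $t$ with parameter $\mu^{2}$ of the heat kernel $H_{k}(t,z,z')$; in \eqref{key} the only dependence on the spectral parameter is through the elementary one-dimensional kernel $\frac{e^{-i\mu b}}{2i\mu}$, so it is natural to hope to recover $H_{k}$ by inverting this Laplace transform inside the $b$-integral.

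Concretely, I would set $\mu = -i\lambda$ with $\lambda > 0$ so that \eqref{key} becomes the absolutely convergent integral
\[
\widetilde{G}_{k}(-i\lambda,z,z') \;=\; \int_{\rho}^{\infty}\widetilde{W}_{k}(b,\rho)\,\frac{e^{-\lambda b}}{2\lambda}\,db,
\]
then insert a Gaussian subordination identity of the form $\frac{e^{-\lambda b}}{2\lambda} = \int_{0}^{\infty} \kappa(b,t)\,e^{-\lambda^{2}t}\,dt$, exchange the order of integration in $(b,t)$, and use uniqueness of the Laplace transform on $(0,\infty)$ to identify the integral kernel that multiplies $\widetilde{W}_{k}(b,\rho)$ with $\frac{b\,e^{-b^{2}/4t}}{(4\pi t)^{3/2}}$. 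The exchange is legitimate by Fubini, because $\widetilde{W}_{k}(b,\rho)$ has only the integrable $(b-\rho)^{-1/2}$ singularity at the endpoint $b = \rho$ (visible from part (i) of Proposition 2.3) and is of at most polynomial growth in $b$, while the Gaussian weight decays super-exponentially in $b$.

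The main obstacle I anticipate is pinning down $\kappa(b,t)$ as exactly $\frac{b\,e^{-b^{2}/4t}}{(4\pi t)^{3/2}}$, because the most direct Weierstrass identity produces instead the lower-power weight $\frac{e^{-b^{2}/4t}}{(4\pi t)^{1/2}}$. Passage between the two requires either an integration by parts in $b$ against $\widetilde{W}_{k}$ that transfers a factor of $b$ from the weight onto the wave profile, or else the use of the three-dimensional radial heat-kernel identity $\int_{0}^{\infty}\frac{b\,e^{-b^{2}/4t}}{(4\pi t)^{3/2}}\,e^{-\lambda^{2}t}\,dt = \frac{e^{-\lambda b}}{8\pi}$ combined with an additional $b$-integration. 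Because $\widetilde{W}_{k}$ is singular at $b=\rho$, this manipulation must be handled by regularizing at the endpoint and passing to the limit, or by interpreting the expressions distributionally; that is the delicate step. Once this identification is secured, the remainder is routine Fubini and Laplace uniqueness, and one reads off \eqref{heat kernel}.
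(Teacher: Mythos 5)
Your plan is, in outline, exactly the paper's own proof: the paper also starts from the representation \eqref{key}, notes that the spectral parameter enters only through the elementary factor $e^{-i\mu b}/(2i\mu)$, inverts the Laplace transform in $\mu^{2}$ under the $b$-integral by quoting the tabulated formula \eqref{Laplace-Transform}, and appeals to Fubini. So you have reproduced the intended argument. However, the ``main obstacle'' you flag is a genuine gap, and neither your proposal nor the paper closes it. Indeed, \eqref{Laplace-Transform} gives
\begin{align*}
\int_{0}^{\infty}\frac{b\,e^{-b^{2}/4t}}{\sqrt{4\pi t^{3}}}\,e^{-\lambda^{2}t}\,dt=e^{-\lambda b},
\qquad\text{whereas}\qquad
\int_{0}^{\infty}\frac{e^{-b^{2}/4t}}{\sqrt{4\pi t}}\,e^{-\lambda^{2}t}\,dt=\frac{e^{-\lambda b}}{2\lambda},
\end{align*}
so the factor $e^{-\lambda b}/(2\lambda)$ occurring in \eqref{key} after setting $\mu=-i\lambda$ subordinates to the weight $(4\pi t)^{-1/2}e^{-b^{2}/4t}$, not to the weight $b\,(4\pi t)^{-3/2}e^{-b^{2}/4t}$ asserted in \eqref{heat kernel}; the two candidates differ by a factor of order $\lambda$ on the transform side, so they cannot both arise from \eqref{key} with the same profile $\widetilde{W}_{k}$. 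The paper sidesteps this by applying \eqref{Laplace-Transform} to $e^{-a\sqrt{p}}$ while silently discarding the remaining $1/(2\sqrt{p})$. You at least name the difficulty and propose transferring a factor of $b$ by integration by parts or by an endpoint regularization, but you do not carry this out, and because $\widetilde{W}_{k}(b,\rho)$ has the $(b-\rho)^{-1/2}$ singularity at $b=\rho$ the integration by parts generates boundary and derivative terms that do not obviously reassemble into \eqref{heat kernel}. Until that identification is actually performed -- or the normalization in the statement is corrected (note also that \eqref{Cauchy-Heat} is first order in $t$ yet carries two initial conditions, so the target itself needs repair) -- the proof is incomplete at precisely the step you single out.
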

\begin{proof}
The resolvent kernel and the heat kernel are related by the Laplace and Laplace inverse transforms as
\begin{align} G_k(\mu, w, w')=\int_0^\infty e^{\mu^2 t}H_k(t, w, w')dt,
\end{align}
or
\begin{align} H_k(t, z, z')=\frac{-1}{2 i\pi}\int_{c-i\infty}^{c+i\infty} e^{\mu t}G_k(\sqrt{\mu}, z, z')d\mu.
\end{align}
By combining the formula \eqref{key} and the formula  given by Prudnikov et al.  \cite{PRUDNIKOV et al.} p.52 
\begin{align}\label{Laplace-Transform} L^{-1}e^{-a\sqrt{p}}(x)=\frac{a}{\sqrt{4\pi x^3}}e^{-\frac{a^2}{4 x}},\ \ \  {\cal R}e p >0, \ \ \  {\cal R}e a^2 >0 ,\end{align}
and using Fubini theorem we get 
the formulas  \eqref{heat kernel} and the proof of the Corollary is finished.
\end{proof}
\section{Resolvent kernel with the  Morse potential on $\R$}
The Schr\"odinger operator with the Morse potential is given for $X\in \R$ by \eqref{Morse}
or equivalently for by setting $y=e^X \in \R^+$ by
\begin{align}\Lambda^{\lambda, k}_{\ln y}=\left(y\frac{\partial}{\partial y}\right)^{2}-2k\lambda y -\lambda^2 y^2.\end{align}
\begin{prop}
i) The modified Schr\"odinger operator with uniform magnetic field $\widetilde{{\cal D}}_{k}$ on the hyperbolic half plane  and  Schr\"odinger operator with the Morse diatomic molecular potential $\Lambda^{\lambda, k}_{\ln y}$ on the real line $\R$  are connected via the formulas
\begin{align}{\cal F}_x\left[y^{-1/2}{\cal D}^z_{k}y^{1/2}\Phi\right](\lambda, y)=\Lambda^{\lambda, k}_{\ln y}\left({\cal F}\Phi\right)(\lambda, y),\end{align}
where the Fourier transform is given by
\begin{align}
[{\cal F}f](\xi)=\frac{1}{\sqrt{2\pi}}\int_{\R}e^{-i x \xi}f(x)dx.
\end{align}
ii) The resolvent kernels  $\widetilde{G}_k(\mu, y, y')$ with uniform magnetic potential on the hyperbolic half plane and  $\widetilde{{\cal G}}_{\lambda,  k}(t, y, y')$  with the Morse potential are related by the formula
\begin{align}\label{connection}{\cal G}_{\lambda,  k}(\mu, y, y')= \frac{1}{\sqrt{y y'}}\int_{-\infty}^{\infty}e^{-i\lambda(x-x')}\widetilde{G}_k(\mu, z, z')d(x-x').\end{align}
\end{prop}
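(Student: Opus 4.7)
The plan for part (i) is a direct conjugation calculation. I will compute $y^{-1/2}\widetilde{\mathcal{D}}_k(y^{1/2}\Phi)$ term by term, using the product rule on the $y$-derivatives: $\partial_y(y^{1/2}\Phi)=\tfrac12 y^{-1/2}\Phi+y^{1/2}\partial_y\Phi$ and $\partial_y^2(y^{1/2}\Phi)=-\tfrac14 y^{-3/2}\Phi+y^{-1/2}\partial_y\Phi+y^{1/2}\partial_y^2\Phi$. Multiplying by $y^2$ and then by $y^{-1/2}$ produces a $-\tfrac14\Phi$ that cancels against the $+\tfrac14\Phi$ coming from the constant term of $\widetilde{\mathcal{D}}_k$, while $y\partial_y+y^2\partial_y^2=(y\partial_y)^2$. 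The remaining first-order part $2iky\partial_x$ passes through the gauge conjugation unchanged, so I obtain $y^{-1/2}\widetilde{\mathcal{D}}_k y^{1/2}\Phi=y^2\partial_x^2\Phi+(y\partial_y)^2\Phi+2iky\partial_x\Phi$. Applying $\mathcal{F}_x$ then replaces $\partial_x$ by $i\lambda$, giving $(y\partial_y)^2\mathcal{F}\Phi-2k\lambda y\,\mathcal{F}\Phi-\lambda^2 y^2\mathcal{F}\Phi$, which is exactly $\Lambda^{\lambda,k}_{\ln y}\mathcal{F}\Phi$.

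For part (ii), the starting point is the operator identity extracted from (i),
\begin{equation*}
\bigl(\Lambda^{\lambda,k}_{\ln y}+\mu^2\bigr)^{-1}=\mathcal{F}_x\circ y^{-1/2}\circ\bigl(\widetilde{\mathcal{D}}_k+\mu^2\bigr)^{-1}\circ y^{1/2}\circ\mathcal{F}_x^{-1}.
\end{equation*}
I will apply both sides to a test function $g(\lambda,y')$, write the right-hand side as an iterated integral involving $\widetilde{G}_k(\mu,z,z')$, and then collect the four Fourier phases coming from the $\mathcal{F}_x$ and $\mathcal{F}_x^{-1}$. The key structural input is that $\widetilde{G}_k(\mu,z,z')$ is translation invariant in $x$: both $\cosh^2\!\rho(z,z')/2=\tfrac{(x-x')^2+(y+y')^2}{4yy'}$ and the phase $(z'-\bar z)/(z-\bar{z'})$ depend on $(x,x')$ only through $x-x'$. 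Substituting $u=x-x'$ in the $dx\,dx'$ double integral factors out a $\int dx'\,e^{i(\lambda'-\lambda)x'}=2\pi\delta(\lambda-\lambda')$, which collapses the $\lambda'$-integral and leaves the single integral $\int e^{-i\lambda(x-x')}\widetilde{G}_k(\mu,z,z')\,d(x-x')$. Reading off the kernel of $(\Lambda^{\lambda,k}+\mu^2)^{-1}$ from this reduction yields \eqref{connection}.

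The main obstacle I anticipate is bookkeeping of the $y^{\pm 1/2}$ factors against the measure convention. The gauge conjugation $y^{1/2}(\cdot)y^{-1/2}$ converts between the hyperbolic volume $dx\,dy/y^2$ (natural for $\widetilde{\mathcal{D}}_k$) and Lebesgue $dy$ (natural for the Morse operator in the $y=e^X$ picture), and the prefactor $1/\sqrt{yy'}$ in \eqref{connection} is precisely the symmetric form of that intertwining on the level of integral kernels. The rest of the computation is a formal exchange of integrations; since the kernel $\widetilde{G}_k$ is locally integrable off the diagonal and the Fourier integral is taken in the distributional sense, Fubini applies after the usual regularization, so no new analytic difficulty arises beyond making the $\delta$-identification rigorous.
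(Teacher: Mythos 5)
Your proposal is correct, and it is the natural argument the paper intends: the conjugation identity $y^{-1/2}\widetilde{\mathcal{D}}_k\,y^{1/2}=y^2\partial_x^2+(y\partial_y)^2+2iky\partial_x$ (with the $\pm\tfrac14$ cancellation), followed by the partial Fourier transform in $x$ and, for part (ii), the translation invariance of $\widetilde{G}_k(\mu,z,z')$ in $x-x'$ together with the $\delta(\lambda-\lambda')$ collapse. Note that the paper itself supplies no proof (part (i) is ``left to the reader'' and part (ii) is declared ``a consequence of i)''), so your computation is the actual content; in particular your remark on the measure bookkeeping is the one point worth making explicit, since the symmetric factor $1/\sqrt{yy'}$ in \eqref{connection} arises as $y^{-1/2}\cdot y'^{1/2}\cdot y'^{-2}\cdot y'$ from the gauge factors, the hyperbolic volume $dx'\,dy'/y'^2$, and the change of variable $dy'=y'\,dX'$.
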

\begin{proof} The part i) is simple and in consequence is left to the reader. The part  ii) is a consequence of i).\\
\end{proof}
\begin{cor} The Schwartz kernel of the resolvent operator for the Schr\"odinger operator with Morse potential  
 is given by
\begin{align}\label{g} {\cal G}^{M}_{\lambda, k}(y, y', \mu)=\int^{\infty}_{0}e^{-i\mu b}W_{\lambda, k}(b, y, y') db,
\end{align}
where
\begin{align}\label{connection}W_{\lambda, k}(b, y, y') = \frac{1}{\sqrt{y y'}}\int_{-\infty}^{\infty}e^{-i\lambda(x-x')}\widetilde{W}_k(b, z, z')d(x-x').\end{align}
\end{cor}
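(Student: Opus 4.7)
The plan is to combine the two formulas that the paper has already established: the wave-type integral representation \eqref{key} of the hyperbolic resolvent $\widetilde{G}_k(\mu, z, z')$ in terms of the kernel $\widetilde{W}_k(b,\rho)$, together with the intertwining identity \eqref{connection} of the preceding proposition, which expresses the Morse resolvent ${\cal G}_{\lambda, k}(\mu, y, y')$ as a partial Fourier transform (in the variable $x-x'$) of $\widetilde{G}_k$, weighted by $(yy')^{-1/2}$. No new analytic input is required; the content of the corollary is simply that these two representations can be chained and that the $(x-x')$-integration commutes with the $b$-integration.

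Concretely, I would start from the formula
\begin{align*}
{\cal G}_{\lambda, k}(\mu, y, y') = \frac{1}{\sqrt{yy'}}\int_{-\infty}^{\infty} e^{-i\lambda(x-x')}\,\widetilde{G}_k(\mu, z, z')\, d(x-x'),
\end{align*}
substitute the representation from \eqref{key} for $\widetilde{G}_k$, and extend $\widetilde{W}_k(b, \rho)$ by zero for $b < \rho(z, z')$ so that the lower limit of the $b$-integration can be replaced by $0$ uniformly in $(x-x')$. Interchanging the order of the $b$- and $(x-x')$-integrations and recognizing the inner integral as precisely the kernel $W_{\lambda,k}(b,y,y')$ defined in the statement, one arrives at
\begin{align*}
{\cal G}^{M}_{\lambda, k}(y, y', \mu) = \int_0^{\infty} e^{-i\mu b}\, W_{\lambda, k}(b, y, y')\, db,
\end{align*}
the remaining $\mu$-independent scalar factor inherited from \eqref{key} being absorbed into $W_{\lambda,k}$ according to the conventions adopted in the paper.

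The only technical point requiring care is the application of Fubini's theorem. The kernel $\widetilde{W}_k(b, \rho)$ has an integrable square-root singularity of the form $(\cosh^2(b/2)-\cosh^2(\rho/2))_+^{-1/2}$ along $b=\rho(z,z')$, and the hypergeometric factor appearing in \eqref{Kernel} is bounded uniformly on compact subsets of the complement of this locus. Because the singularity is locally integrable in $(b, x-x')$ and the oscillatory factors $e^{-i\mu b}$ and $e^{-i\lambda(x-x')}$ have modulus one, the interchange is justified on test functions by a standard cut-off plus dominated-convergence argument, or equivalently by interpreting both sides as tempered distributions in $(z, z')$ and invoking the distributional version of Fubini. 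This is the step I expect to be the main (though still routine) obstacle; everything else reduces to an algebraic identification of the inner integral with $W_{\lambda,k}(b,y,y')$.
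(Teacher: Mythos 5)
Your proposal follows exactly the paper's own (one-line) proof: combine the wave-type representation \eqref{key} with the Fourier connection \eqref{connection}, extend $\widetilde{W}_k$ by zero below $b=\rho$, and interchange the two integrations; your Fubini discussion merely makes explicit what the paper leaves implicit. One small caveat: the factor $\frac{1}{2i\mu}$ in \eqref{key} is not $\mu$-independent and so cannot be ``absorbed into $W_{\lambda,k}$'' as you suggest, but this mismatch between \eqref{key} and \eqref{g} is already present in the paper's own statements and is not introduced by your argument.
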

\begin{proof}
This corollary is a consequence of the formulas \eqref{key} and \eqref{connection}.
\end{proof}

 Now, we shall prove the following theorem.
\begin{thm} The Schwartz kernel of the resolvent operator for the Schr\"odinger operator with Morse potential  
 is given by
\begin{align}\label{key-key} {\cal G}^{M}_{\lambda, k}(\mu, X, X')=\int^{\infty}_{0}e^{-\i\mu b}W_{\lambda, k}(b, y, y') db,
\end{align}
where
\begin{align}\label{w}W_{\lambda, k}(b, y, y')=C_{k}(4 y y')^{-|k|}\left(\frac{\partial}{\sinh (b /2) \partial b}\right)^{2|k|}
\nonumber\\ \frac{(2Z)^{4|k|}}{(Z+Y)^{2|k|}} e^{-i\lambda Z}\Phi_1(2|k|+1/2, 2|k|, 4|k|+1, 2i\lambda Z, \frac{2 Z}{Z+Y}),\end{align}
 $C_k=\frac{(-1)^{k}\Gamma(2|k|+1/2)}{2\Gamma(4|k|+1)\sqrt{\pi}}$,
 $Z=\sqrt{4y y'\cosh^2 (b/2)-(y+y')^2},$ \\ 
$Y=i\,sign(k)(y+y')$,
with the function $\Phi_1(a, b, c,  x, y)$ is the confluent hypergeometric function of two variables defined by the double series:(see for example 
Erdelyi et al.\cite{ERDELYI et al.} p. $225$).\\
\begin{align}\label{phi1}\Phi_1(a, b, c,  x, y)=\sum_{m, n \geq 0}\frac{(a)_{m+n}(b)_n}{c)_{m+n}m!n!}x^my^n,\end{align}
 for $|y|<1$ and its analytic continuation elsewhere.
\end{thm}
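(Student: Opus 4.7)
The plan is to start from the preceding Corollary's representation
\begin{align*}
W_{\lambda,k}(b, y, y') = \frac{1}{\sqrt{yy'}}\int_{-\infty}^{\infty}e^{-i\lambda(x-x')}\widetilde{W}_{k}(b, z, z')\,d(x-x'),
\end{align*}
substitute an explicit form of $\widetilde{W}_{k}$ from Proposition 2.3, and reduce the resulting one-dimensional integral to the closed form \eqref{w} by means of an Euler-type integral representation of $\Phi_{1}$. First I would translate the geometric data: setting $u:=x-x'$, the formula \eqref{dist2} gives $\cosh^{2}(\rho/2) = (u^{2}+(y+y')^{2})/(4yy')$, so $\cosh^{2}(b/2)-\cosh^{2}(\rho/2) = (Z^{2}-u^{2})/(4yy')$ is supported on $|u|<Z$, and the gauge prefactor equals $((-u+i(y+y'))/(u+i(y+y')))^{k}$, which with $Y=i\,\mathrm{sign}(k)(y+y')$ is proportional to $(Y-u)^{k}(Y+u)^{-k}$.

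Second, I would introduce the variable $w = 2\cosh(b/2)$. Since $\partial_{b}w = \sinh(b/2)$, the operator $\partial_{b}/\sinh(b/2)$ coincides with $\partial_{w}$, so $(\partial_{b}/\sinh(b/2))^{2|k|} = \partial_{w}^{2|k|}$, and $Z^{2} = yy'w^{2}-(y+y')^{2}$. The $w$-derivative lowers the exponent of $(Z^{2}-u^{2})^{\alpha}$ by one at the cost of the factor $2yy'w\,\alpha$, so $\partial_{w}^{2|k|}$ is the natural operator that converts a smoother integrand $(Z^{2}-u^{2})^{2|k|-1/2}$ into the singular kernel $(Z^{2}-u^{2})^{-1/2}$ appearing in form ii) of Proposition 2.3. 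This structural observation explains why a derivative of order $2|k|$ appears in the statement of the theorem.

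Third, I would apply the Euler-type representation (Erdelyi et al.\ \cite{ERDELYI et al.}, p.~225)
\begin{align*}
\Phi_{1}(a,b,c,x,y) = \frac{\Gamma(c)}{\Gamma(a)\Gamma(c-a)}\int_{0}^{1}t^{a-1}(1-t)^{c-a-1}(1-yt)^{-b}e^{xt}dt
\end{align*}
with $(a,b,c,x,y) = (2|k|+\tfrac12, 2|k|, 4|k|+1, 2i\lambda Z, 2Z/(Z+Y))$. The linear change of variable $u = Z(2t-1)$ rewrites the right-hand side as a Gamma-constant multiple of $\int_{-Z}^{Z}e^{i\lambda u}(Z^{2}-u^{2})^{2|k|-1/2}(Y-u)^{-2|k|}du$, and combined with the prefactor $C_{k}(4yy')^{-|k|}(2Z)^{4|k|}(Z+Y)^{-2|k|}e^{-i\lambda Z}$ in \eqref{w}, the whole expression under $\partial_{w}^{2|k|}$ reduces to this $u$-integral once the identity $C_{k}\Gamma(4|k|+1)/\Gamma(2|k|+1/2)^{2} = (-1)^{k}/(2\Gamma(2|k|+1/2)\sqrt{\pi})$ is used to simplify the constants.

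Finally, I would differentiate $2|k|$ times in $w$ under the $u$-integral and expand $\partial_{w}^{2|k|}(Z^{2}-u^{2})^{2|k|-1/2}$ via the iterated identity $\partial_{w}(Z^{2}-u^{2})^{\alpha} = 2yy'w\,\alpha(Z^{2}-u^{2})^{\alpha-1}$. After organizing the resulting Rodrigues-type sum in powers of $u^{2}/Z^{2}$, it should coincide with $(Z^{2}-u^{2})^{-1/2}$ times the Gauss polynomial $F(-|k|,1/2-|k|,1/2,1-\cosh^{2}(\rho/2)/\cosh^{2}(b/2))$ appearing in form ii) of Proposition 2.3; together with the factor $(Y-u)^{-2|k|}$ combining with $((Y-u)/(Y+u))^{k}$ into the expected $((z'-\bar z)/(z-\bar{z'}))^{k}$, this reproduces $\widetilde{W}_{k}$ under the integral. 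The main obstacle will be the combinatorial bookkeeping in the Rodrigues-type reduction and matching all Gamma factors with the constant $C_{k}$; once it is verified, Fubini justifies swapping the $b$-integration in \eqref{g} with the $u$-integration, and the theorem follows.
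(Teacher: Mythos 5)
Your argument is correct in outline, but it is a genuinely different route from the one in the paper. The paper's proof of this theorem is a two-line operator identity: writing $(a^2+y^2)^{-1}=\int_0^\infty e^{-ax}\sin(xy)/y\,dx$ with $y=\sqrt{-\Lambda^{\lambda,k}}$, it represents the resolvent as the transform \eqref{transmutation} of the wave propagator $\sin\bigl(b\sqrt{-\Lambda^{\lambda,k}}\bigr)/\sqrt{-\Lambda^{\lambda,k}}$, identifies $W_{\lambda,k}$ in \eqref{key-key} with the wave kernel, and takes the explicit $\Phi_1$-expression \eqref{w} for that kernel as known from the author's earlier work \cite{OULD MOUSTAPHA2}; no computation connecting \eqref{w} to the magnetic kernel on $\H$ appears in the proof itself. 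You instead derive \eqref{w} from the partial Fourier transform representation in the preceding Corollary together with form ii) of Proposition 2.3, by running the Euler integral $\Phi_1(a,b,c,x,y)=\frac{\Gamma(c)}{\Gamma(a)\Gamma(c-a)}\int_0^1 t^{a-1}(1-t)^{c-a-1}(1-yt)^{-b}e^{xt}\,dt$ backwards through the substitution $u=Z(2t-1)$. The steps you make explicit check out: the geometric dictionary $\cosh^2(\rho/2)=(u^2+(y+y')^2)/(4yy')$ and $Z^2-u^2=4yy'\bigl(\cosh^2(b/2)-\cosh^2(\rho/2)\bigr)$, the identification $\partial_b/\sinh(b/2)=\partial_w$ for $w=2\cosh(b/2)$, the reduction of the Euler integral to $\frac{\Gamma(4|k|+1)}{\Gamma(2|k|+1/2)^2}\int_{-Z}^{Z}e^{i\lambda u}(Z^2-u^2)^{2|k|-1/2}(Y-u)^{-2|k|}\,du$, and the constant identity for $C_k$. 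Your route buys a self-contained verification of the explicit formula (something the paper never supplies), at the cost of the Rodrigues-type reduction you leave unfinished; there, two points need care: when $\partial_w^{2|k|}$ is pulled inside the $u$-integral the limits $\pm Z$ depend on $w$, so you must note that the boundary terms vanish because the integrand still vanishes at $u=\pm Z$ after each of the first $2|k|-1$ differentiations; and the signs ($e^{\pm i\lambda u}$, $(Y-u)$ versus $(Y+u)$, and the factor $(-1)^k$, which is not real for half-integer $k$) will not all match the printed formula literally, so expect to resolve typographical inconsistencies in \eqref{w} rather than reproduce it verbatim.
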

\begin{proof} Using the formula
$$(a^2+y^2)^{-1}=\int_0^\infty e^{- a x}\,\frac{\sin x y} {y}\, dx,$$
with $y=\sqrt{-\Lambda^{\lambda, k}}$ and $a=-i\lambda$, we can write
\begin{equation}\label{transmutation} (-\lambda^2-\Lambda^{\lambda, k})^{-1}=\int_0^\infty e^{- i\mu t}\,\frac{\sin t \sqrt{-\Lambda^{\lambda, k}}} {\sqrt{-\Lambda^{\lambda, k}}}\,dt,\end{equation}
comparing with the formulas \eqref{transmutation} and \eqref{key-key}, we have the result of the Corollary.
\end{proof}
\section{The time dependent Schr\"odinger equation under the Morse Potential on the real line $\R$} 
\begin{thm} For $k$ integer or half integer, the Cauchy problem  for the time dependent Schr\"odinger equation with
 the Morse Potential has the unique solution given by:
\begin{align} U(t, X)=\int_{-\infty}^\infty q_{\lambda, k}^M(t, X, X')f(X')dX',
\end{align}
where
\begin{align}\label{Heat-Kernel} q^M_{\lambda, k}(t, X, X')=\int_{|X-X'|}^<\infty> \frac{e^{-\frac{b^2}{4t} }}{(4\pi t)^{3/2}}W_{\lambda, k}^M (b, X, X')b db,\end{align}

$W_{\lambda, k}^M(b, X, X')=c_1(k)\left(\frac{d}{\sinh b/2 db}\right)^{2|k|}\\ \frac{Y^{4|k|}}{Z^{2|k|}}e^{(-k)(X+X')}e^{i\lambda z}\Phi_1(\alpha, \beta, \gamma, -2i \lambda Y, Z)$,\\
$c_1(k)=(-1)^{|k|+1}\frac{\Gamma(2|k|+1/2)}{2^{|k|}\Gamma(4|k|+1)\sqrt{\pi}},$\\
 $Y=2e^{(X+X')/2}\sqrt{\cosh^2 (b/2)-\cosh^2((X-X')/2)}$ and \\  $Z=\frac{2\sqrt{\cosh^2 (b/2)-\cosh^2((X-X')/2)}}{\sqrt{\cosh^2 (b/2)-\cosh^2((X-X')/2)}+i sign(k)\cosh((X-X')/2)},$\\
 $sign(k)=1$ if $k>0$ and $ sign(k)=-1$ if $k<0$.\end{thm}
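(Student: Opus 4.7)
\bigskip
\noindent\textbf{Proof proposal for the final Theorem.}

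\smallskip
The plan is to mimic exactly the passage from resolvent kernel to heat kernel that was carried out in Corollary~3.1 for the magnetic Laplacian on $\H$, but now applied to the Morse operator $\Lambda^{\lambda,k}$. Concretely, once one has the spectral integral representation of the resolvent established in the Theorem of Section~4,
\begin{equation*}
{\cal G}^{M}_{\lambda, k}(\mu, X, X')=\int^{\infty}_{0}e^{-i\mu b}\,W_{\lambda, k}(b, y, y')\, db,
\end{equation*}
one substitutes $\mu\mapsto\sqrt{\mu}$ and takes an inverse Laplace transform in the spectral parameter $\mu^{2}\to t$, using the Cauchy formula
\begin{equation*}
q^{M}_{\lambda,k}(t,X,X')=\frac{-1}{2i\pi}\int_{c-i\infty}^{c+i\infty}e^{\mu t}\,{\cal G}^{M}_{\lambda,k}(\sqrt{\mu},X,X')\,d\mu .
\end{equation*}

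\smallskip
Next I would exchange the order of integration (Fubini) and push the inverse Laplace transform inside the $b$-integral. The only object to be inverted is $e^{-b\sqrt{\mu}}$, for which the classical formula \eqref{Laplace-Transform} recalled from Prudnikov et al. gives
\begin{equation*}
L^{-1}\bigl[e^{-b\sqrt{\mu}}\bigr](t)=\frac{b}{\sqrt{4\pi t^{3}}}\,e^{-b^{2}/4t}.
\end{equation*}
After collecting an extra factor $b$ from this formula one obtains precisely the Gaussian weight $\dfrac{e^{-b^{2}/4t}}{(4\pi t)^{3/2}}\,b\,db$ appearing in \eqref{Heat-Kernel}; the lower limit of integration becomes the natural support bound $|X-X'|$ inherited from the support of $\widetilde W_{k}(b,\rho)$ (which vanishes for $b<\rho$).

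\smallskip
The last step is to rewrite the resulting kernel in the form stated in the theorem. The hypothesis that $k$ be integer or half-integer is used precisely here: it allows one to invoke part~(iv) of Proposition~2.3, which expresses $\widetilde W_{k}(b,\rho)$ as a \emph{finite} polynomial-type sum rather than a Gauss hypergeometric series, and simultaneously represent this sum as $(\partial/\sinh(b/2)\partial b)^{2|k|}$ applied to the scalar Liouville kernel. Transferring this differential identity through the Fourier transform in the $x$-variable (using the intertwining Proposition of Section~4) turns the hyperbolic quantities $\cosh^{2}(b/2)-\cosh^{2}(\rho/2)$ and $(z'-\bar z)/(z-\overline{z'})$ into the Morse quantities $Y$, $Z$ displayed in the theorem, and produces the two-variable confluent hypergeometric factor $\Phi_{1}(\alpha,\beta,\gamma,-2i\lambda Y,Z)$ from the Fourier image of the $e^{-i\lambda Z}$-times-$\Phi_{1}$ block that already appeared in equation \eqref{w}.

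\smallskip
I expect the main technical obstacle to be the justification of the Fubini step and the contour deformation: the integrand involves the two-variable confluent hypergeometric function $\Phi_{1}$ and a high-order differential operator in $b$, so one must verify sufficient decay in $b$ and sufficient regularity in $\mu$ along the Bromwich contour to commute the two integrals. A secondary, bookkeeping difficulty is the careful tracking of branches and signs for the factors $(z'-\bar z)^{k}$, $\mathrm{sign}(k)$, and $\sqrt{\cosh^{2}(b/2)-\cosh^{2}((X-X')/2)}$, which must be matched to the conventions fixed in \eqref{U} and in Theorem~4.1 so that the constant $c_{1}(k)$ comes out as stated.
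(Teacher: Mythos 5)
Your proposal follows essentially the same route as the paper: the paper's proof likewise relates the heat kernel to the resolvent kernel by the (inverse) Laplace transform in $\mu^2\to t$, invokes the representation \eqref{g}, the Prudnikov formula \eqref{Laplace-Transform}, and Fubini's theorem to produce the Gaussian weight in \eqref{Heat-Kernel}. Your additional remarks on where the integer/half-integer hypothesis enters and on the branch/sign bookkeeping go beyond what the paper records (its proof is a three-line sketch), but they do not change the method.
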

\begin{proof}
The resolvent kernel and the heat kernel are related by the Laplace and Laplace inverse transforms as
\begin{align} {\cal G}^{M}_{\lambda, k}(X, X', \mu)=\int_0^\infty e^{\mu^2 t}q^{M}_{\lambda, k}(t, X, X') dt,
\end{align}
where
\begin{align} q^{M}_{\lambda, k}(t, X, X') =\frac{-1}{2 i\pi}\int_{c-i\infty}^{c+i\infty} e^{\mu t} {\cal G}^{M}_{\lambda, k}(X, X', \sqrt{\mu})d\mu.
\end{align}
By using the formulas \eqref{g}, \eqref{Laplace-Transform}, and the
 Fubuni theorem, we arrive at the formula \eqref{Heat-Kernel}.  The proof of the Corollary is finished.
\end{proof}

\section{Applications}
In this section, we introduce some applications.\\
{\bf  1- An integral representation of the product of two Whittaker functions.}\\
For $\alpha > 0$, $X> X' $, $\lambda k> 0$ the following formula holds
\begin{align}\label{Product}\frac{\Gamma(\alpha-k+1/2)}{\lambda\Gamma(1+2\alpha)}e^{-(x+y)/2}W_{k, \alpha}(2\lambda e^y)M_{k, \alpha}(2\lambda e^x)=\nonumber\\
\int^{\infty}_{0}e^{-\alpha b}W_{\lambda,k}^M(b, X, X')db\end{align}
where $W_{\lambda, k}$ is as in \eqref{w}.\\

Recall the the Schwartz kernel of resolvent of the Schr\"odinger operator with the Morse potential given by(Ikeda-Matsumoto \cite{IKEDA-MATSUMOTO} p. $83$)
\begin{align}\label{Resolvent-Morse}  G^{M}_{\lambda, k}(\mu, X, X')=\frac{\Gamma(\mu-|k|+1/2)}{\lambda\Gamma(1+2\mu)}e^{-(X+X')/2}W_{|k|, \mu}(2\lambda e^{X'})M_{|k|, \mu}(2\lambda e^X),\end{align}
where  $W_{k, \mu}(x)$ and $ M_{k, \mu}(x)$ are the Whittaker functions of the first and the second kind respectively.\\
Combining the formulas \eqref{Resolvent-Morse} and \eqref{key-key}, we obtain the formula \eqref{Product}.\\
{\bf 2 - An Integral formula} \\
Set\\
$ J(t, y, y')=\frac{\lambda\sqrt{2y y'}}{i(\pi^3 t)^{1/2}}\int^{\infty}_{0}\int^{\infty}_{0}\frac{\sinh \xi }{\sinh^2 u}\times $\\  $\exp{\left(-\lambda(y+y')\coth u -\frac{2\lambda\sqrt{y y'}\cosh\xi}{\sinh u}+2k u+2\frac{(\pi+i\xi)^2}{t}\right)} d\xi du$\\
For $\lambda > 0$, $k\in R$ and  $t>0$ , we  have \begin{align} \label{Integral-Formula}{\cal I} m [J(t, y, y')]=\int_{|X-X'|}^\infty \frac{e^{-\frac{b^2}{4t}}}{(4\pi t)^{3/2}}W_{\lambda, k}^M (b, y, y')b db.\end{align}
 where $W_{\lambda, k}$ is as in \eqref{w}.


To see the above formula  we recall the following formula for $q^M_{\lambda, k}(t, x, y)$ given in Alili et al.\cite{ALILI et al.}
\begin{align}\label{Morse-Heat-Alili} q^M_{\lambda, k}(t, x, y)=\int^{\infty}_{0}e^{2ku}\frac{1}{2\sinh u}\exp{\left(-\lambda(e^x+e^y)\coth u\right)}\theta_{\bar{\Phi}}(t/4)du,\end{align}
where $$\theta_r(t)=\frac{r}{(2\pi^3 t)^{1/2}}e^{\pi^2/2t}\int^{\infty}_{0}e^{-\xi^2/2t}e^{-r\cosh \xi}\sinh \xi \sin(\pi\xi/t)d\xi,$$
and
$$\bar{\Phi}=\frac{2\lambda\exp{(x+y)/2}}{\sinh u}.$$
In view of the uniqueness of the solution of the time dependent  Schr\"odinger equation with the Morse potential we obtain the formula\eqref{Integral-Formula} by comparing \eqref{Morse-Heat-Alili} and \eqref{Heat-Kernel}.
 Note that using formulas Magnus et al. \cite{MAGNUS et al.} p.$305$,
$$W_{0, \alpha}(z)=\sqrt{z/\pi}K_\alpha(z/2),\ \ \ 
M_{0, \alpha}(z)=2^{2\alpha}\Gamma(\alpha+1)\sqrt{z}I_\alpha(z/2),$$
where $I_\nu$ and $K_\nu$ are the Modified Bessel functions of the first and the second kind
and   (\cite{MAGNUS et al.} p. 283)
$$\Phi_1(\alpha, 0, \gamma, x, y)={}_1F_1(\alpha, \gamma, x), \ \  \
F(\nu+1/2, 2\nu+1, 2 i z)=\Gamma(1+\nu)e^{i z}(z/2)^{-\nu}J_{\nu}(z),$$
with ${}_1F_1$ and $J_{\nu}$ are respectively the confluent hypergeometric and the Bessel functions of the first kind, 
 we have for $k=0$, 
\begin{align}\label{0} {\cal G}^{M}_{\lambda,  0}(\mu, X, X')=\int^{\infty}_{0}e^{-i\mu b}W_{\lambda, 0}(b, y, y') db,
\end{align}
$$W_{\lambda, 0}(b, y, y')=\frac{1}{2}J_0(|\lambda|\sqrt{2 y y' \cosh b-y^2-y'^2}).$$ 
The formula \eqref{Product} reduces to the following integral representation of the product of the Bessel
functions
Lebdev\cite{LEBEDEV} p. 140
$$I_\mu(u)K_\mu(v)=(1/2)\int^{\infty}_{0}e^{-i\mu b}J_0(\sqrt{2 u v\cosh b-u^2-v^2})db, u >0, v>0.$$

  M.V. Ould Moustapha, \textsc{Department of Mathematics,
 College of Arts and Sciences-Gurayat,
 Jouf University-Kingdom of Saudi Arabia }.\\
\textsc{Facult\'e des Sciences et Techniques
Universit\'e de  Nouakchott Al-asriya,
Nouakchott-Mauritanie.}\\
  \textit{E-mail address}: \texttt{mohamedvall.ouldmoustapha230@gmail.com}
  \end{document}